\documentclass[conference]{IEEEtran} 
\IEEEoverridecommandlockouts
\usepackage{setspace}
\usepackage{subcaption}
\usepackage{graphicx}
\usepackage{float}
\usepackage{amsmath}
\usepackage{amsmath,amssymb,amsfonts}
\usepackage{algorithm}
\usepackage{algorithmicx}
\usepackage{algpseudocode}
\usepackage{array}
\usepackage{amsthm}
\usepackage{amsmath}
\usepackage{amssymb}
 \usepackage{multirow}
\usepackage{mdwmath}
\usepackage{mdwtab}
\usepackage{eqparbox}
\usepackage{stfloats}
\usepackage{fixltx2e}
\usepackage{cases} 
\usepackage{caption}
\usepackage{graphicx}
\usepackage{float} 
\usepackage{booktabs} 
\usepackage{makecell} 

 \usepackage{flushend}

\usepackage{xcolor}
\usepackage{hyperref}
\hypersetup{
    colorlinks=true, 
    linkcolor=blue, 
    citecolor=blue, 
    urlcolor=blue 
}
\usepackage{url}
\usepackage{textcomp}
\usepackage{cite}
\def\BibTeX{{\rm B\kern-.05em{\sc i\kern-.025em b}\kern-.08em
    T\kern-.1667em\lower.7ex\hbox{E}\kern-.125emX}}
\allowdisplaybreaks[4]

\newtheorem{lemma}{\bf Lemma}

\newtheorem{remark}{Remark}

\newtheorem{conjecture}{\bf Conjecture}
\usepackage{wrapfig}

\newcommand{\algoref}[1]{Algorithm~\ref{#1}}



\captionsetup[figure]{name={Fig.},labelsep=period,font={small}}

\makeatletter
\renewcommand{\maketag@@@}[1]{\hbox{\m@th\normalsize\normalfont#1}}%
\makeatother
\makeatletter
\makeatother     
\UseRawInputEncoding 

\makeatletter
\newcommand{\linebreakand}{%
  \end{@IEEEauthorhalign}
  \hfill\mbox{}\par
  \mbox{}\hfill\begin{@IEEEauthorhalign}
}
\makeatother

\begin{document}

\title
{Towards a Theoretical Framework for Robust Node Deployment in Cooperative ISAC Networks}
\author{\IEEEauthorblockN{ Haojin Li\textsuperscript{12}, Kaiqian Qu\textsuperscript{3}, Chen Sun\textsuperscript{2*}, Anbang Zhang\textsuperscript{4}, Xiaoxue Wang\textsuperscript{2}, Wenqi Zhang\textsuperscript{2},  Haijun Zhang\textsuperscript{1}}
\IEEEauthorblockA{
\textsuperscript{1} Beijing Engineering and Technology Research Center for Convergence Networks and Ubiquitous Services, \\University of Science and Technology Beijing, Beijing, China\\
\textsuperscript{2} Wireless Network Research Department, Sony China Research Laboratory, Beijing, China\\
\textsuperscript{3} School of Information Science and Engineering, Southeast University, Nanjing 210096, China\\
\textsuperscript{4} School of Control Science and Engineering, {Shandong University},
Jinan, China\\
{\textsuperscript{*}Corresponding Author:  Chen Sun (Email: chen.sun@sony.com).}
}
}



\maketitle


\begin{abstract}
This paper investigates node deployment strategies for robust multi-node cooperative localization in integrated sensing and communication (ISAC) networks.We first analyze how steering vector correlation across different positions affects localization performance and introduce a novel distance-weighted correlation metric to characterize this effect. Building upon this insight, we propose a deployment optimization framework that minimizes the maximum weighted steering vector correlation by optimizing simultaneously node positions and array orientations, thereby enhancing worst-case network robustness. Then, a genetic algorithm (GA) is developed to solve this min–max optimization, yielding optimized node positions and array orientations. Extensive simulations using both multiple signal classification (MUSIC) and neural-network (NN)-based localization validate the effectiveness of the proposed methods, demonstrating significant improvements in robust localization performance.
\end{abstract}

\begin{IEEEkeywords}
Node deployment, multi-node cooperative, ISAC localization, distance-weighted correlation, genetic algorithm
\end{IEEEkeywords}

\section{Introduction}
\IEEEPARstart{I}{ntegrated} sensing and communication (ISAC) has emerged as a key paradigm for sixth-generation (6G) networks, enabling the joint utilization of spectrum and hardware resources\cite{1}. While existing research has largely concentrated on single-node ISAC systems, covering aspects such as signal design and beamforming \cite{2,3,4}, these architectures are inherently limited in terms of coverage, angular diversity, and robustness against interference. Multi-node cooperative ISAC systems overcome these limitations by leveraging spatially distributed nodes, thereby extending coverage, offering diverse sensing perspectives, and enhancing interference resilience \cite{10972176}.

Despite these advantages, the deployment of cooperative nodes plays a decisive role in determining system performance, particularly for localization robustness and continuity within the service region. Most prior works have emphasized collaborative frameworks\cite{10972176,Rahman2020,9,Meng2025}, collaborative MIMO techniques \cite{10465053,6,Meng2024,8}, collaborative signal processing\cite{10906066,Liu2024TargetLW,Wei2024}, whereas the fundamental impact of node deployment on localization performance remains underexplored.
Authors in \cite{10.1109/TWC.2024.3491356} analyzed the communication and sensing functions in multi-node cooperation and adopted the Cram\'{e}r Rao lower bound (CRLB) to evaluate localization accuracy. Nevertheless, \cite{10.1109/TWC.2024.3491356} assume that node locations follow a two-dimensional Poisson point process (PPP), implying random distributions. Consequently, their conclusions are essentially statistical in nature and provide little guidance for practical deployment design. 

To address these challenges, this paper investigates node deployment strategies for robust cooperative localization in ISAC networks. We focus on guaranteeing worst-case localization accuracy so as to ensure spatially continuous and resilient positioning services. To this end, we propose a novel distance-weighted steering-vector correlation metric that captures the impact of node geometry on localization accuracy. Based on this metric, we formulate a min–max optimization problem and develop a genetic algorithm (GA) to jointly optimize node positions and array orientations. The proposed framework is validated using two representative localization methods: the classical multiple signal classification (MUSIC) algorithm and neural-network (NN)-based localization. Extensive simulations demonstrate that our deployment optimization framework significantly enhances robustness compared to conventional fixed or random deployment strategies.

\section{System model}

\begin{figure}[!t]
    \centering
    \includegraphics[width=0.9\linewidth]{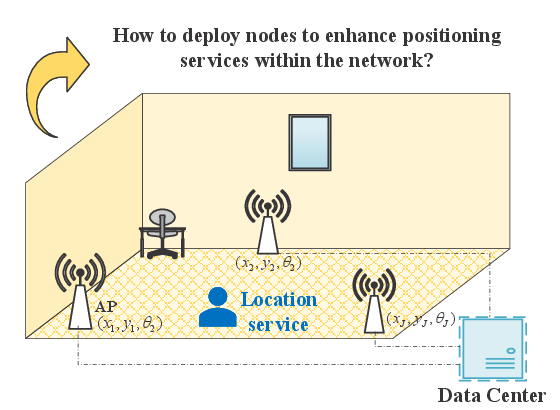}
    \caption{Illustration of the multi-node cooperative localization system.}
    \label{fig:system_model}
\end{figure}

We consider an indoor multi-node cooperative localization scenario, as illustrated in Fig.~\ref{fig:system_model}. 
The network consists of $J$ APs, each equipped with a uniform linear array (ULA) comprising $N$ antennas with spacing $d$ and carrier wavelength $\lambda$. 
The deployment parameters of the $j$-th AP are denoted as $(x_j, y_j, \theta_j)$, where $(x_j, y_j)$ represents the Cartesian coordinate of the AP and $\theta_j$ denotes the orientation angle of its ULA relative to the $x$-axis. 
A location service is provided for a target device located within the coverage area, where all APs simultaneously collect received signals and forward them to a centralized data center for cooperative processing. 
This distributed antenna architecture enables joint estimation of the target’s position by exploiting the spatial information embedded in the received signals.

\subsection{Signal Model}
The received signals are forwarded to a centralized data center, where signal-level fusion is performed to exploit spatial diversity across nodes. After fusion, the aggregated received signal can be expressed as
\begin{equation}\label{eq1}
\mathbf{y}(t) = \mathbf{a}(\mathbf{u},  \mathbf{p})\, s(t) + \mathbf{n}(t) \in \mathbb{C}^{NJ},
\end{equation}
where $\mathbf{a}(\mathbf{p}) \in \mathbb{C}^{NJ}$ denotes the composite steering vector determined by the target position $\mathbf{p}$ and the antanna position $\mathbf{u}$, $s(t)$ is the transmitted signal, and $\mathbf{n}(t)$ represents the additive noise vector.\footnote{The signal $s(t)$ may be an active positioning waveform transmitted by the target if it is equipped with communication capability, or an echo of a sensing signal reflected at the target.}
\subsubsection{Steering Vector Model}
The steering vector $\mathbf{a}(\mathbf{u},\mathbf{p})$ in \eqref{eq1} captures the geometric propagation relationship between the target position $\mathbf{p}$ and the antenna array positions $\mathbf{u}$. Specifically, the composite steering vector can be expressed as
\begin{equation}
\mathbf{a}(\mathbf{u},\mathbf{p}) =
\begin{bmatrix}
\mathbf{a}_1(\mathbf{u}_1,\mathbf{p}) ,
\ldots ,
\mathbf{a}_J(\mathbf{u}_J,\mathbf{p})
\end{bmatrix}^T\in \mathbb{C}^{NJ},
\end{equation}
where $\mathbf{a}_j(\mathbf{u}_j,\mathbf{p})\in\mathbb{C}^{N}$ denotes the local steering vector of the $j$-th node, and is given element-wise by
\begin{equation}
a_{j,n}(\mathbf{u}_{j,n},\mathbf{p})
= \sqrt{\frac{1}{NJ}}\exp\!\left(-\jmath \tfrac{2\pi}{\lambda} \left\|\mathbf{p}-\mathbf{u}_{j,n}\right\|_2\right),
\end{equation}
with $\mathbf{u}_{j,n}$ denoting the physical position of the $n$-th antenna at node $j$ and $\lambda$ the carrier wavelength.

For notational compactness and deployment optimization, each node array can be parameterized by its center coordinates and orientation angle. Let $\mathbf{v}_j \triangleq [x_j,\,y_j,\,\theta_j]^{\top}$ denote the deployment parameters of the $j$-th node, with $(x_j,y_j)$ being the array center and $\theta_j$ the orientation. The $n$-th antenna position is then
\begin{equation}
\mathbf{u}_{j,n} = 
\begin{bmatrix}x_j\\y_j\end{bmatrix}
+ \left(n-\tfrac{N+1}{2}\right)d\,
\begin{bmatrix}\cos\theta_j \\ \sin\theta_j\end{bmatrix},
\end{equation}
where $d$ is the inter-element spacing. Substituting this expression into the element-wise model yields the compact form $\mathbf{a}(\mathbf{v},\mathbf{p})$, where $\mathbf{v}=[\mathbf{v}_1^{\top},\ldots,\mathbf{v}_J^{\top}]^{\top}$ collects the deployment parameters of all nodes.
\subsubsection{Sample Covariance Matrix Estimation}
The covariance matrix is estimated from $T$ snapshots of the received signal vector $\mathbf{y}(t)$ as
\begin{equation}
\widehat{\mathbf{R}} = \frac{1}{T}\sum_{t=1}^{T}\mathbf{y}(t)\mathbf{y}^{H}(t) \in \mathbb{C}^{NJ \times NJ}.
\end{equation}
The sample covariance $\widehat{\mathbf{R}}$ encapsulates the spatial correlation among antenna elements across different nodes, and serves as the key input for both classical subspace-based estimators and data-driven localization models.

\section{Node Deployment Optimization Based on Steering Vector Correlation}

In this section, we investigate the optimization of node deployment for robust cooperative localization.
\subsection{Robustness Analysis}

To elucidate the fundamental impact of node deployment on localization robustness, this section first introduces two lemmas. These lemmas, derived respectively from neural-network–based localization methods and spatial-spectrum–based approaches, analyze the role of steering vector correlation in the formation of localization errors.
\begin{lemma}
\label{lemma:correlation}
In NN-based grid localization, increased inner-product similarity among steering vectors across different positions reduces feature discriminability, thereby impairing classification accuracy.
\end{lemma}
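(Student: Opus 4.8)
The plan is to make the two informal notions in the statement precise --- ``feature discriminability'' and ``classification accuracy'' --- and then to link them to the steering-vector correlation by an argument that does not depend on the particular network architecture. First I would model the noise-free input feature at a grid position $\mathbf{p}$ as the signal covariance $\mathbf{R}(\mathbf{p}) = \sigma_s^2\,\mathbf{a}(\mathbf{v},\mathbf{p})\mathbf{a}(\mathbf{v},\mathbf{p})^H + \sigma_n^2\mathbf{I}$, with signal power $\sigma_s^2$ and noise power $\sigma_n^2$. Since the $\sqrt{1/NJ}$ prefactor makes the composite steering vector unit-normalized, $\|\mathbf{a}(\mathbf{v},\mathbf{p})\|_2 = 1$, a direct expansion shows that the Frobenius distance between the features of two positions $\mathbf{p}_1,\mathbf{p}_2$ reduces to a function of the steering-vector correlation $\rho \triangleq \mathbf{a}(\mathbf{v},\mathbf{p}_1)^H\mathbf{a}(\mathbf{v},\mathbf{p}_2)$ alone, namely $\|\mathbf{R}(\mathbf{p}_1)-\mathbf{R}(\mathbf{p}_2)\|_F = \sqrt{2}\,\sigma_s^2\sqrt{1-|\rho|^2}$. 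This identity is the backbone of the proof: the inter-class separation in feature space decays monotonically to zero as $|\rho|\to 1$.

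Next I would lower-bound the attainable classification error by the Bayes error of the induced two-point testing problem, a bound that no classifier --- neural or otherwise --- can circumvent. Modeling the finite-snapshot estimate $\widehat{\mathbf{R}}$ as a perturbation of $\mathbf{R}(\mathbf{p})$ whose fluctuation scales like $(\sigma_s^2+\sigma_n^2)/\sqrt{T}$ in Frobenius norm (standard sample-covariance concentration over $T$ snapshots), deciding between $\mathbf{p}_1$ and $\mathbf{p}_2$ is a detection problem with ``signal'' gap $\sqrt{2}\,\sigma_s^2\sqrt{1-|\rho|^2}$ and ``noise'' level $\propto 1/\sqrt{T}$. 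A Le Cam / Fano-type two-point inequality then yields a minimum error probability that is a decreasing function of the effective separation $\sqrt{T}\,(1-|\rho|^2)^{1/2}$, hence an increasing function of $|\rho|$, approaching the random-guessing value as $|\rho|\to 1$. An equivalent, optimization-side statement uses Lipschitz continuity of any realizable network $f$ with constant $L_f$: $\|f(\mathbf{R}(\mathbf{p}_1))-f(\mathbf{R}(\mathbf{p}_2))\|_2 \le \sqrt{2}\,L_f\,\sigma_s^2\sqrt{1-|\rho|^2}$, so the softmax margin separating the two grid cells is squeezed by the same factor and collapses as $|\rho|\to 1$. The argument passes from the pairwise case to the whole grid by taking the worst pair, so the controlling quantity is $\max_{\mathbf{p}_1\neq\mathbf{p}_2}|\mathbf{a}(\mathbf{v},\mathbf{p}_1)^H\mathbf{a}(\mathbf{v},\mathbf{p}_2)|$ --- precisely the object minimized by the deployment optimization in the next section.

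The main obstacle I anticipate is not any individual computation but the modeling commitment required to turn a heuristic into a theorem: one must fix a feature map (raw covariance, eigenvectors, or normalized spatial spectrum) and a notion of ``accuracy'' (Bayes risk versus trained-network margin). I would resolve this by stating the result architecture-agnostically through the Bayes-error route, so the conclusion holds for any classifier acting on covariance-derived features, and by presenting the Lipschitz-margin version as a corollary for trained networks; the only genuinely technical ingredient is the sample-covariance concentration bound that calibrates the $1/\sqrt{T}$ noise level, which is classical.
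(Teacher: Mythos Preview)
Your core computation matches the paper exactly: model the class-conditional feature as $\mathbf{R}_i=E\,\mathbf{a}_i\mathbf{a}_i^H+\sigma^2\mathbf{I}$, expand $\|\mathbf{R}_i-\mathbf{R}_j\|_F$, and obtain $E\sqrt{2(1-|\mathbf{a}_i^H\mathbf{a}_j|^2)}$, so that the inter-class separation in feature space vanishes as $|\mathbf{a}_i^H\mathbf{a}_j|\to 1$. The paper's proof \emph{stops there}: it takes ``feature discriminability'' to mean precisely this Frobenius gap and draws the classification-accuracy conclusion informally from the observation that a zero gap implies indistinguishable inputs.

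Your proposal is correct but substantially more ambitious. You add three layers the paper does not attempt: (i) a finite-snapshot noise model via sample-covariance concentration at rate $1/\sqrt{T}$, (ii) a classifier-agnostic lower bound on the Bayes error through a Le~Cam/Fano two-point argument, and (iii) a Lipschitz-margin corollary for trained networks. These are the ingredients one would need to turn the paper's heuristic into an honest theorem with a quantitative error bound; the paper itself is content with the qualitative statement that shrinking feature separation ``implies higher classification ambiguity.'' So your route is a strict superset: it reproduces the paper's identity as its first step and then supplies the rigor the paper leaves implicit. The trade-off is that the paper's argument fits in five lines and requires no probabilistic machinery, whereas yours would need careful statement of the snapshot model and the two-point prior to be airtight.
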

\begin{proof}
Let \(\mathbf{y}(t)\in\mathbb{C}^{N}\) denote the received snapshot \emph{column} vector. 
The network input feature is the vectorized sample covariance
\begin{equation}
    \widehat{\mathbf{r}} \;=\; \mathrm{vec}\!\big(\widehat{\mathbf{R}}\big), ~\widehat{\mathbf{R}}= \frac{1}{T}\sum_{t=1}^{T}\mathbf{y}(t)\mathbf{y}^{H}(t)
\approx E\,\mathbf{a}_i\mathbf{a}_i^{H}+\sigma^{2}\mathbf{I},
\end{equation}
where \(E=\mathbb{E}[|s(t)|^{2}]\) is the signal power and \(\mathrm{vec}(\cdot)\) stacks matrix columns.

The neural network is trained to classify among \(K\) classes based on covariance features. 
We define the inter-class separability between classes \(i\) and \(j\) as the Frobenius norm of the difference between their covariance matrices:
\begin{equation}
    \Delta \mathbf{R}_{ij} \;\triangleq\; \mathbf{R}_i - \mathbf{R}_j 
    \;=\;E\!\left(\mathbf{a}_i \mathbf{a}_i^{H} - \mathbf{a}_j \mathbf{a}_j^{H}\right).
\end{equation}

Assuming normalized steering vectors $\|\mathbf{a}_i\| = \|\mathbf{a}_j\| = 1$, we have:
\begin{equation}
    \begin{split}
            \|\Delta \mathbf{R}_{ij}\|_F &= E \left\| \mathbf{a}_i \mathbf{a}_i^H - \mathbf{a}_j \mathbf{a}_j^H \right\|_F \\
    &= E\sqrt{2\left(1 - |\mathbf{a}_i^H \mathbf{a}_j|^2 \right)}.
    \end{split}
\end{equation}

This result shows that the inter-class distance is inversely related to the inner-product similarity between steering vectors. As $|\mathbf{a}_i^H \mathbf{a}_j| \to 1$, we have $\|\Delta \mathbf{R}_{ij}\|_F \to 0$, which implies higher classification ambiguity.
\end{proof}

\begin{lemma}
\label{lemma:correlation2}
In spatial spectrum estimation methods, a high correlation between the steering vectors of the true and a false target location increases the risk of mislocalization.
\end{lemma}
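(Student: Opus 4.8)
The plan is to analyze the behavior of the MUSIC (or more generally, any noise-subspace) pseudo-spectrum near a false peak, and show that its value at a false location $\mathbf{p}'$ grows as the steering-vector correlation $|\mathbf{a}(\mathbf{v},\mathbf{p}')^H \mathbf{a}(\mathbf{v},\mathbf{p})|$ with the true location $\mathbf{p}$ approaches one. First I would recall that, with a single source at $\mathbf{p}$, the covariance $\widehat{\mathbf{R}} \approx E\,\mathbf{a}(\mathbf{v},\mathbf{p})\mathbf{a}(\mathbf{v},\mathbf{p})^H + \sigma^2 \mathbf{I}$ has a one-dimensional signal subspace spanned by $\mathbf{a}(\mathbf{v},\mathbf{p})$, so the noise-subspace projector is $\mathbf{P}_n = \mathbf{I} - \mathbf{a}(\mathbf{v},\mathbf{p})\mathbf{a}(\mathbf{v},\mathbf{p})^H$ (assuming unit-norm steering vectors). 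The MUSIC spectrum at a candidate location $\mathbf{p}'$ is $P_{\mathrm{MUSIC}}(\mathbf{p}') = 1 / \big(\mathbf{a}(\mathbf{v},\mathbf{p}')^H \mathbf{P}_n \mathbf{a}(\mathbf{v},\mathbf{p}')\big)$, and a straightforward substitution gives the denominator equal to $1 - |\mathbf{a}(\mathbf{v},\mathbf{p}')^H \mathbf{a}(\mathbf{v},\mathbf{p})|^2$.

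Next I would make the mislocalization argument precise. The true location $\mathbf{p}$ produces a denominator of $0$ (an infinite peak in the noiseless case), while any other location $\mathbf{p}'$ produces a finite value; in the noiseless limit the estimator is exact. The point is the \emph{finite-SNR / finite-snapshot} regime: perturbing $\widehat{\mathbf{R}}$ away from its expectation perturbs the estimated noise subspace, and the magnitude of the resulting spurious elevation of $P_{\mathrm{MUSIC}}(\mathbf{p}')$ is controlled by how flat the spectrum is around the true peak — i.e., by $1 - |\mathbf{a}(\mathbf{v},\mathbf{p}')^H \mathbf{a}(\mathbf{v},\mathbf{p})|^2$. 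When this quantity is small for some $\mathbf{p}'$ well-separated from $\mathbf{p}$, an $O(\|\widehat{\mathbf{R}} - \mathbf{R}\|)$ perturbation can push the estimated peak from $\mathbf{p}$ to $\mathbf{p}'$, causing a gross localization error rather than a small one. I would formalize this by a first-order perturbation expansion of the noise-subspace projector (standard subspace perturbation, e.g. via the eigenvector sensitivity $\delta \mathbf{u} = -(\mathbf{R} - \lambda \mathbf{I})^\dagger \delta\mathbf{R}\,\mathbf{u}$) and bounding the change in the pseudo-spectrum value at $\mathbf{p}'$ relative to its value at $\mathbf{p}$, concluding that the ``margin'' separating the true peak from the false one scales with $1 - |\mathbf{a}(\mathbf{v},\mathbf{p}')^H \mathbf{a}(\mathbf{v},\mathbf{p})|^2$, which vanishes as the correlation tends to one.

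As an alternative, lighter-weight route that may be more appropriate for a conference paper, I would instead argue directly in terms of the Bartlett / conventional beamformer spectrum $P_{\mathrm{BF}}(\mathbf{p}') = \mathbf{a}(\mathbf{v},\mathbf{p}')^H \widehat{\mathbf{R}}\, \mathbf{a}(\mathbf{v},\mathbf{p}') = E\,|\mathbf{a}(\mathbf{v},\mathbf{p}')^H \mathbf{a}(\mathbf{v},\mathbf{p})|^2 + \sigma^2$, where the response at the false location is literally the squared correlation times the signal power; a false location with correlation close to one is essentially indistinguishable from the true one at any finite SNR, and noise fluctuations of order $\sigma^2$ suffice to make $P_{\mathrm{BF}}(\mathbf{p}') > P_{\mathrm{BF}}(\mathbf{p})$. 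This makes the ``increased risk of mislocalization'' statement immediate without invoking subspace perturbation machinery.

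The main obstacle is making the phrase ``increases the risk of mislocalization'' quantitatively rigorous: strictly speaking one needs a probabilistic model of the perturbation $\widehat{\mathbf{R}} - \mathbf{R}$ (finite $T$, Gaussian noise) and then a tail bound showing $\Pr[\text{estimated peak at } \mathbf{p}'] $ is monotincreasing in the correlation. I expect the cleanest compromise is to prove the deterministic noiseless identity $P_{\mathrm{MUSIC}}(\mathbf{p}')^{-1} = 1 - |\mathbf{a}(\mathbf{v},\mathbf{p}')^H \mathbf{a}(\mathbf{v},\mathbf{p})|^2$ rigorously, note that this is the ``safety margin'' against mislocalization, and then give a one-line first-order perturbation remark that a perturbation of norm $\varepsilon$ can overturn the ranking of the two peaks whenever $1 - |\mathbf{a}(\mathbf{v},\mathbf{p}')^H \mathbf{a}(\mathbf{v},\mathbf{p})|^2 = O(\varepsilon)$ — which is exactly the claimed qualitative statement.
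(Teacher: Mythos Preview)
Your proposal is correct and lands on the same qualitative conclusion as the paper, but the route differs in scope and technique. The paper treats all spatial-spectrum methods at once by writing the score as a generic quadratic form $g(\mathbf{p})=\mathbf{a}(\mathbf{p})^{H}\widehat{\mathbf{T}}\,\mathbf{a}(\mathbf{p})$ for an arbitrary PSD $\widehat{\mathbf{T}}$ (covering MUSIC, MVDR, and Bartlett simultaneously), then orthogonally decomposes the false steering vector as $\mathbf{b}=\alpha\mathbf{a}+\mathbf{r}$ with $\alpha=\mathbf{a}^{H}\mathbf{b}$ and bounds the cross and residual terms by $\|\widehat{\mathbf{T}}\|_{2}\sqrt{1-|\alpha|^{2}}$-type quantities to conclude $g(\mathbf{b})\to g(\mathbf{a})$ as $|\alpha|\to 1$. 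Your approach instead computes the MUSIC denominator explicitly as $1-|\alpha|^{2}$ for the specific projector $\mathbf{P}_n=\mathbf{I}-\mathbf{a}\mathbf{a}^{H}$, and separately handles Bartlett; this is more concrete for MUSIC and recovers the exact margin, whereas the paper's decomposition is method-agnostic but gives only bounds. You also go further than the paper by sketching a first-order subspace-perturbation argument to make ``risk'' quantitative; the paper stops at the deterministic indistinguishability statement $g(\mathbf{b})\approx g(\mathbf{a})$ without any finite-sample analysis. Either argument suffices for the lemma as stated; the paper's buys uniformity across estimators in a single calculation, yours buys an explicit closed-form margin for MUSIC and a clearer link to finite-SNR error probability.
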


\begin{proof}
Let $\mathbf{a} = \mathbf{a}(\mathbf{p}_{\mathrm{true}})$ and $\mathbf{b} = \mathbf{a}(\mathbf{p}_{\mathrm{false}})$ be the steering vectors for the true and a false location, respectively. Spatial spectrum methods (e.g., MUSIC, MVDR, Beamforming) compute a power score $g(\mathbf{p}) = \mathbf{a}^H(\mathbf{p}) \widehat{\mathbf{T}} \mathbf{a}(\mathbf{p})$ for each location, where $\widehat{\mathbf{T}}$ is a positive semi-definite matrix (e.g., the noise subspace projector, inverse covariance or covariance).

Decompose the false steering vector $\mathbf{b}$ relative to the true one $\mathbf{a}$:
\begin{equation}
\mathbf{b} = \alpha \mathbf{a} + \mathbf{r}, 
\end{equation}
where$ \alpha = \mathbf{a}^H\mathbf{b},\ \mathbf{a}^H\mathbf{r}=0,\ \|\mathbf{r}\|^2 = 1 - |\alpha|^2$.
Substituting into the spectrum expression yields:
\begin{align}
g(\mathbf{b}) &= |\alpha|^2 \mathbf{a}^H \widehat{\mathbf{T}} \mathbf{a} + 2\Re\left\{ \alpha^* \mathbf{a}^H \widehat{\mathbf{T}} \mathbf{r} \right\} + \mathbf{r}^H \widehat{\mathbf{T}} \mathbf{r}.
\end{align}

As $|\alpha| = |\mathbf{a}^H\mathbf{b}| \to 1$ (high correlation):
\begin{itemize}
  \item The first term approaches $g(\mathbf{a})$.
  \item The second cross-term is bounded by $2\|\widehat{\mathbf{T}}\|_2 |\alpha|\sqrt{1-|\alpha|^2} \to 0$.
  \item The third term is bounded by $\|\widehat{\mathbf{T}}\|_2 (1-|\alpha|^2) \to 0$.
\end{itemize}
Therefore, $g(\mathbf{b}) \approx g(\mathbf{a})$, making the false location spectrally indistinguishable from the true one and greatly increasing the likelihood of mislocalization.
\end{proof}

Therefore, the correlation among steering vectors directly determines the probability of correctly identifying the true spatial position. However, the impact of misidentification on localization error also depends on the distance between the correct and erroneous spatial positions. Building upon this observation, we put forward a reasonable conjecture: 

\begin{conjecture}
The localization error is primarily governed by the distance-weighted correlation of steering vectors within the network, i.e.,
\begin{equation}
    \mathrm{RMSE}_{\max} \ \propto  \max_{i,j} \ \rho(\mathbf{p}_i,\mathbf{p}_j,\mathbf{v}),
\end{equation}
where $\rho(\mathbf{p}_i,\mathbf{p}_j,\mathbf{v})=\big|\mathbf{a}_i^H \mathbf{a}_j \big| \cdot d_{ij}^{\alpha}$, $\mathbf{a}_i$ and $\mathbf{a}_j$ denote the steering vectors associated with candidate positions $i$ and $j$, $d_{ij}$ represents their spatial distance, and $\alpha$ is a weighting factor.
\end{conjecture}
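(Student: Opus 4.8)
The final statement is a conjecture, so the aim of the proposal is a semi-rigorous derivation that upgrades \lemmaref{lemma:correlation} and \lemmaref{lemma:correlation2} from the qualitative claim ``higher correlation $\Rightarrow$ more confusion'' to the quantitative proportionality asserted, leaving the constant and the exponent $\alpha$ to be fixed empirically (as the simulations do). The plan is to model any grid-based localizer --- MUSIC, MVDR, beamforming, or the NN classifier --- as a maximum-score rule over the candidate set $\{\mathbf{p}_1,\dots,\mathbf{p}_K\}$, so that when the true position is $\mathbf{p}_i$ the estimate $\hat{\mathbf{p}}$ equals $\mathbf{p}_j$ with a confusion probability $P_{ij}(\mathbf{v})$, and then to express the per-position mean-squared error as $\mathrm{MSE}_i = \sum_{j\ne i} P_{ij}\,d_{ij}^2 + \varepsilon_0$, where $\varepsilon_0$ collects the grid-quantization and small-perturbation floor that is independent of the deployment to first order.

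First I would bound $P_{ij}$ in terms of $|\alpha_{ij}|\triangleq|\mathbf{a}_i^H\mathbf{a}_j|$ for each estimator family. For the spatial-spectrum case, \lemmaref{lemma:correlation2} already gives $g(\mathbf{p}_j) = |\alpha_{ij}|^2\, g(\mathbf{p}_i) + O(\sqrt{1-|\alpha_{ij}|^2}\,\|\widehat{\mathbf{T}}\|_2)$; writing the noisy score as its mean plus a zero-mean fluctuation and applying a Chernoff/Gaussian tail bound to the event $\{g(\mathbf{p}_j)\ge g(\mathbf{p}_i)\}$ yields a bound of the form $P_{ij}\le \exp(-c\,\mathrm{SNR}\,(1-|\alpha_{ij}|^2))$, i.e.\ a confusion probability monotonically increasing in $|\alpha_{ij}|$ that collapses to $0$ as $|\alpha_{ij}|\to 0$. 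For the NN classifier, \lemmaref{lemma:correlation} gives the inter-class margin $\|\Delta\mathbf{R}_{ij}\|_F = E\sqrt{2(1-|\alpha_{ij}|^2)}$, and a standard margin-based argument (the Bayes/robust error of a classifier is controlled by inter-class distance relative to the noise level) again makes $P_{ij}$ increasing in $|\alpha_{ij}|$ with $P_{ij}\to 0$ as $|\alpha_{ij}|\to 0$. In both cases the common form is $P_{ij}\approx \phi(|\alpha_{ij}|)$ for a fixed increasing $\phi$ with $\phi(0)=0$.

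Next I would reduce the sum to its dominant term: since $\phi$ is steep near $1$ and negligible for small arguments, the worst-case position maximizes essentially a single product, $\max_i \mathrm{MSE}_i \approx \max_{i,j} P_{ij}\,d_{ij}^2 \approx \max_{i,j}\phi(|\alpha_{ij}|)\,d_{ij}^2$. Taking square roots and absorbing the estimator- and SNR-dependent shape of $\phi$ into a single tunable exponent via the monotone re-parameterization $\sqrt{\phi(|\alpha_{ij}|)}\,d_{ij} \asymp |\alpha_{ij}|\,d_{ij}^{\alpha}$ --- which preserves the argmax and the ordering of deployments, the only thing that matters for the subsequent min--max optimization --- gives exactly $\mathrm{RMSE}_{\max}\ \propto\ \max_{i,j}\ |\mathbf{a}_i^H\mathbf{a}_j|\,d_{ij}^{\alpha}$, i.e.\ the metric $\rho(\mathbf{p}_i,\mathbf{p}_j,\mathbf{v})$.

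The main obstacle --- and the reason the statement is kept as a conjecture rather than a theorem --- is twofold. First, the single-pair reduction and, more importantly, the replacement of $\phi$ by a clean power law $d_{ij}^{\alpha}$ are only heuristic: the true confusion-probability-to-correlation relationship is estimator-specific, noise-dependent, and not a pure power of $|\alpha_{ij}|$, so no universal proportionality constant exists and $\alpha$ must be calibrated per scenario. Second, the maximum-score model ignores the global geometry of the spatial spectrum (several simultaneously correlated decoys, mainlobe broadening, NN feature-space curvature), effects that can realistically only be controlled empirically. I would therefore close the argument exactly as the paper does: establish the monotone qualitative chain rigorously through the two lemmas, and then validate the precise proportionality --- and fit $\alpha$ --- via the MUSIC and NN simulations.
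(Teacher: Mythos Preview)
Your proposal is sound in spirit but goes considerably further than the paper does. The paper does not attempt any derivation of the conjecture beyond the qualitative motivation already contained in \lemmaref{lemma:correlation} and \lemmaref{lemma:correlation2}: it simply observes that correlation drives confusion and that the \emph{cost} of a confusion scales with inter-point distance, posits the product form $|\mathbf{a}_i^H\mathbf{a}_j|\cdot d_{ij}^{\alpha}$ as a reasonable surrogate, and then moves directly to Monte Carlo validation --- plotting $\mathrm{RMSE}_{\max}$ against $\max_{i,j}\rho$ over $1000$ random deployments and sweeping $\alpha$ to maximize the Pearson correlation (peaking near $\alpha\approx 0.05$). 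There is no confusion-probability model, no Chernoff-type bound on $P_{ij}$, no dominant-term reduction, and no explicit reparameterization of a $\phi(\cdot)$ into a power law; those steps are your own additions. What your route buys is an explanatory bridge that makes the form of $\rho$ look principled rather than ad hoc, and it correctly anticipates that $\alpha$ must be calibrated empirically and per scenario. What the paper's purely empirical route buys is honesty about the status of the claim: it never pretends the proportionality is derived, only that it is observed, which is why the statement remains a conjecture. Your closing paragraph already acknowledges exactly this gap, so the two approaches converge at the same endpoint.
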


To validate the proposed conjecture, we conduct Monte Carlo experiments with 1000 random deployment realizations. As shown in Fig.~\ref{fig:correlation_experiment},  RMSE exhibits a clear positive correlation with the maximum distance-weighted steering vector correlation across different values of $\alpha$.

\begin{figure}[!t]
    \centering
    \includegraphics[width=0.85\linewidth]{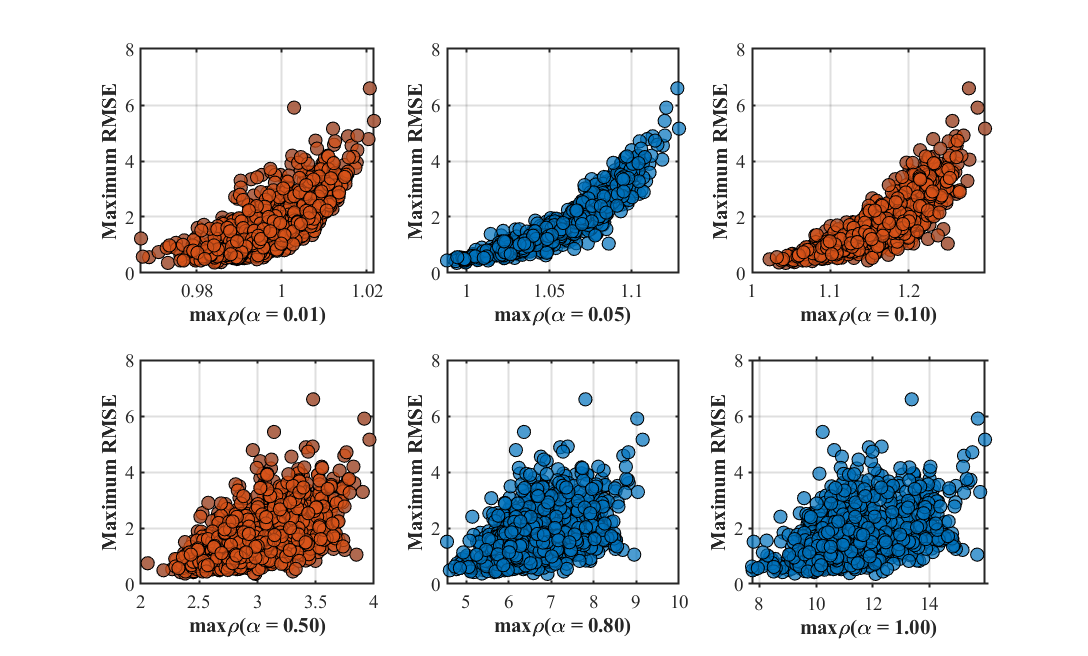}
    \caption{Distance-weighted correlation vs Maximum RMSE under 1000 random deployment realizations.}
    \label{fig:correlation_experiment}
\end{figure}

\begin{figure}[!t]
    \centering
    \includegraphics[width=0.7\linewidth]{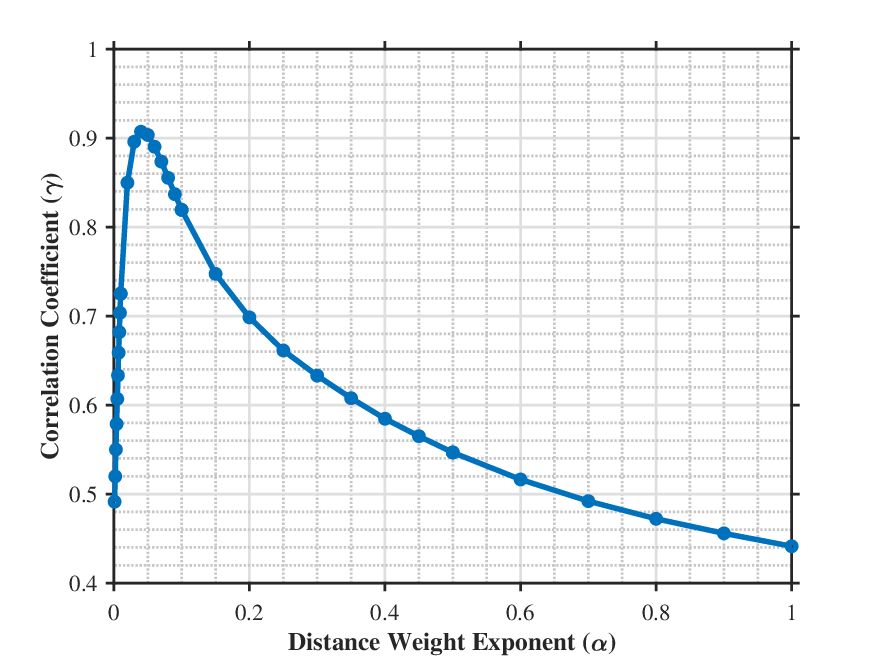}
    \caption{Correlation Analysis for Different Distance Weight Exponents}
    \label{fig:correlation}
\end{figure}

We evaluate the correlation between the maximum localization error and the maximum distance-weighted steering-vector correlation. Specifically, we compute the Pearson correlation coefficient
\begin{equation}
\gamma(\alpha) 
= \mathrm{corr}\!\left(\mathrm{RMSE}_{\max}^{(k)}, \, \max_k \rho(\alpha) \right),
\end{equation}
where $\mathrm{RMSE}_{\max}^{(k)}$ denotes the worst-case localization error across all target positions in the network under the $k$-th random development, and $\max_k \rho(\alpha)$ represents the maximum weighted steering-vector correlation with weighting factor $\alpha$. 

The simulation results in Fig.~\ref{fig:correlation} demonstrate that $\gamma(\alpha)$ remains consistently high for different values of $\alpha$, thereby confirming the strong dependence between the worst-case localization error and the maximum weighted steering-vector correlation. In particular, the correlation achieves its peak around $\alpha = 0.05$, indicating that this weighting factor best captures the structural characteristics governing localization robustness. The consistent trend across a large number of trials provides strong empirical evidence in support of our conjecture.

\begin{remark}
Since our optimization framework aims to guarantee the robustness of network-wide localization performance, the node deployment and selection strategies can be regarded as offline optimization procedures. In this context, it is reasonable to determine the distance-weight exponent $\alpha$ by Monte Carlo simulations, so as to capture the statistical relationship between steering vector correlation and localization robustness in the considered network.
\end{remark}


\subsection{Problem Formulation}

The analysis confirms that localization robustness is governed by the correlation between steering vectors at different locations. Our conjecture and empirical results show that the worst-case error is strongly tied to the maximum \emph{distance-weighted} steering vector correlation. We therefore formulate the robust deployment problem as minimizing this maximum correlation.

Let $\mathcal{P}$ be the set of all potential target positions. For a deployment configuration $\mathbf{v}$ (encoding node positions and orientations), the distance-weighted correlation between any two points $\mathbf{p}_i, \mathbf{p}_j \in \mathcal{P}$ is defined as:
\begin{equation}
\rho(\mathbf{v}, \mathbf{p}_i, \mathbf{p}_j) = \left| \tilde{\mathbf{a}}(\mathbf{v}, \mathbf{p}_i)^H \tilde{\mathbf{a}}(\mathbf{v}, \mathbf{p}_j) \right| \cdot \| \mathbf{p}_i - \mathbf{p}_j \|_2^{\alpha}.
\end{equation}
The optimization goal is to find the deployment $\mathbf{v}$ that minimizes the maximum correlation over all pairs of distinct points:
\begin{equation}
\min_{\mathbf{v}} \; \max_{\substack{\mathbf{p}_i, \mathbf{p}_j \in \mathcal{P} \\ i \neq j}} 
\; \rho(\mathbf{v}, \mathbf{p}_i, \mathbf{p}_j ).
\end{equation}
This min-max formulation directly targets the worst-case ambiguity in the network, ensuring uniformly robust localization performance. We propose a genetic algorithm to solve this non-convex optimization problem efficiently.
\subsection{GA-based Node Deployment Optimization}
\addtolength{\topmargin}{0.051in}
\subsubsection{Encoding Scheme}
Each chromosome employs a real-coded encoding scheme to represent one deployment configuration:
\[
\mathbf{c}=\big[\mathbf{v}_1^\top,\mathbf{v}_2^\top,\ldots,\mathbf{v}_J^\top\big]^\top\in\mathbb{R}^{3J},
\]
with bounds
\((x_j,y_j)\in\mathcal{R}\ (\text{feasible region}),\theta_j\in[0,2\pi),\quad j=1,\ldots,J.\)

\subsubsection{Fitness Function}
For a chromosome $\mathbf{c}$, construct all antenna positions via \(\mathbf{u}_{j,n}\), compute steering vectors on all grid points, and evaluate the maximum weighted correlation:
\begin{equation}
f(\mathbf{c}) \;=\; 
\max_{1\le i<j\le|\mathcal{P}|}\;
\big|\tilde{\mathbf{a}}(\mathbf{c},\mathbf{p}_i)^H\tilde{\mathbf{a}}(\mathbf{c},\mathbf{p}_j)\big|\,
\|\mathbf{p}_i-\mathbf{p}_j\|_2^{\alpha}
\;。
\label{eq:fitness}
\end{equation}

\subsubsection{Genetic Operators}

\paragraph*{\textbf{Initialization}}
Randomly sample $(x_j,y_j)$ uniformly in $\mathcal{R}$ (or from a given candidate set) and $\theta_j\sim\mathcal{U}[0,2\pi)$ to form the initial population $\{\mathbf{c}^{(0)}_m\}_{m=1}^{P}$.

\paragraph*{\textbf{Selection (Tournament)}}
Pick $k$ chromosomes uniformly at random and select the one with the smallest $f(\cdot)$ as parent. Repeat to obtain the parent pool.

\paragraph*{\textbf{Crossover (SBX, real-coded)}}
 For two parents $\mathbf{c}^{(a)}$ and $\mathbf{c}^{(b)}$, generate two children $\mathbf{z}^{(1)}$ and $\mathbf{z}^{(2)}$ element-wise via  (SBX) with distribution index $\eta_c$ and crossover probability $p_c$:
\begin{equation}\label{eq19}
\begin{split}
        z^{(1)}_q = \tfrac{1}{2}\big[(1+\beta_q)c^{(a)}_q + (1-\beta_q)c^{(b)}_q\big],\\
 z^{(2)}_q = \tfrac{1}{2}\big[(1-\beta_q)c^{(a)}_q + (1+\beta_q)c^{(b)}_q\big],
\end{split}
\end{equation}
where 
\[\beta_q=\begin{cases}
(2u)^{1/(\eta_c+1)}, & u\le 1/2,\\[2pt]
\big(1/(2-2u)\big)^{1/(\eta_c+1)}, & u>1/2,
\end{cases}\\\]
with $u\sim\mathcal{U}(0,1)$ and $q=1,\ldots,3J$. Project $(x_j,y_j)$ back to $\mathcal{R}$ and wrap $\theta_j$ to $[0,2\pi)$.

\renewcommand{\algorithmicrequire}{\textbf{Input:}}
	\renewcommand{\algorithmicensure}{\textbf{Output:}}
\begin{algorithm}[t]
\caption{GA-based Node Deployment Optimization}
\label{alg:GA-deploy}
\begin{algorithmic}[1]
\Require  $P$, $p_c$, $p_m$, $\eta_c, \eta_m$, $G_{\max}$, $N_{e}$, $\mathcal{R}$, $\mathcal{P}$.
\Ensure Best deployment  $\mathbf{v}=\mathbf{c}^*$.

\State \textbf{Initialization:} Randomly generate $P$ chromosomes 
$\{\mathbf{c}_m^{(0)}\}_{m=1}^P$, and evaluate fitness $f(\mathbf{c}_m^{(0)})$. Set $\mathbf{c}^*=\arg\min f(\mathbf{c}_m^{(0)})$.

\For{$g=0$ to $G_{\max}-1$}
  \State  Apply tournament selection to form parent pool.
  
  \State  \parbox[t]{\linewidth}{\raggedright For each parent pair,
  generate offspring via Eq. (\ref{eq19}). Project offspring back into feasible range: 
  $z_q \leftarrow \min(\max(z_q,L_q),U_q)$.}
  
  \State  \parbox[t]{\linewidth}{\raggedright Perform the mutation operation for each gene via Eq. (\ref{eq20}), and project as
  $z_q' \leftarrow \min(\max(z_q',L_q),U_q)$. } 

  \State \parbox[t]{\linewidth}{\raggedright Carry over best $N_e$ parent chromosomes $\{\mathbf{c}_e^{(g)}\}$ to \\next generation $\{\mathbf{c}_m^{(g+1)}\}=\{\mathbf{c}_e^{(g)},\mathbf{z}\}$.}
  
  \State  Evaluate $f(\mathbf{c}_m^{(g+1)})$.

  \State Update $\mathbf{c}^* \leftarrow 
  \arg\min\{ f(\mathbf{c}^*), \min_m f(\mathbf{c}_m^{(g+1)})\}$.
\EndFor
\State \Return $\mathbf{v}=\mathbf{c}^*$
\end{algorithmic}
\end{algorithm}
\paragraph*{\textbf{ Mutation (Polynomial)}}
Adopt polynomial mutation for each gene with probability $p_m$. 
Let a gene $z_q$ be bounded by $[L_q,U_q]$, and define
\[
\delta_1=\frac{z_q-L_q}{U_q-L_q},\quad 
\delta_2=\frac{U_q-z_q}{U_q-L_q},\quad 
m=\frac{1}{\eta_m+1}.
\]
The mutation step $\Delta$ is drawn as
\[
\Delta=
\begin{cases}
\big(2u + (1-2u)(1-\delta_1)^{\eta_m+1}\big)^{m}-1, & u\le \tfrac{1}{2},\\[4pt]
1-\big(2(1-u) + 2(u-\tfrac{1}{2})(1-\delta_2)^{\eta_m+1}\big)^{m}, & u>\tfrac{1}{2},
\end{cases}
\]
and the mutated gene is
\begin{equation}\label{eq20}
  z_q' \;=\; z_q \;+\; \Delta\,(U_q-L_q).  
\end{equation}

Finally, project $z_q'$ back to $[L_q,U_q]$ if necessary.
The distribution index $\eta_m$ controls locality (larger $\eta_m$ $\Rightarrow$ smaller perturbations).

\paragraph*{\textbf{Elitism \& Stopping}}
Carry over the best $N_e$ parent chromosomes $\{\mathbf{c}_e^{(g)}\}$ to the next generation ($P=N_e+k$). Stop when the maximum generation $G_{\max}$ is reached.

\begin{figure}[!t]
    \centering
    \includegraphics[width=0.7\linewidth]{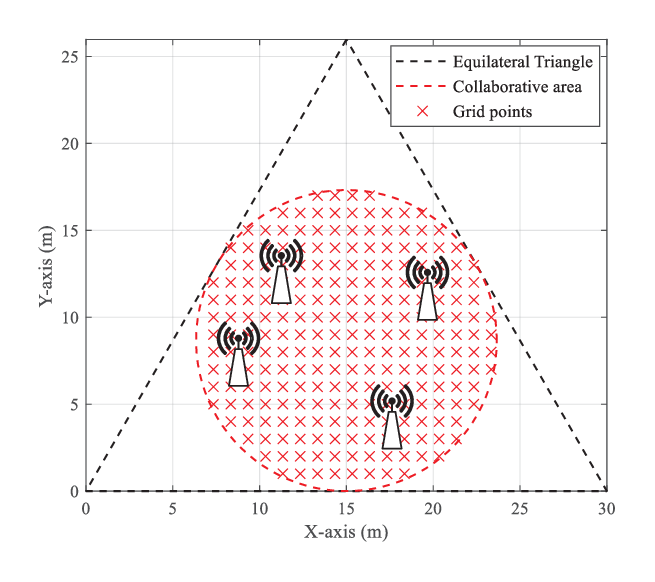}
    \caption{Simulation scenario of multi-node cooperative localization.}
    \label{result1}
\end{figure}

The details of the proposed GA-based approach for solving
 problem (17) are summarized in \algoref{alg:GA-deploy}.

\section{SIMULATIONS AND DISCUSSION}
\label{va}
This section validates the proposed node deployment scheme. We consider an indoor multi-node cooperative ISAC system with a circular coverage area (inscribed in a 30m equilateral triangle, see Fig. \ref{result1}). Targets are localized on a 1m-resolution grid $\mathcal{P}$.  Unless otherwise specified, the carrier frequency is $2.4$\,GHz, SNR $=\frac{\mathbb{E}(s^2(t))}{\mathbb{E}(n^2(t))}$ is set to $0$ dB, the number of snapshots is $200$. $J$ nodes will be deployed within the incircle, each equipped with a ULA of $N=4$ antennas.

For the GA parameters, we set the population size $P=100$, crossover probability $p_c=0.8$, mutation probability $p_m=0.2$, number of elites $N_e=4$, and maximum generations $G_{\max}=500$.

\begin{figure*}[!t]
    \centering
        \begin{minipage}[b]{0.32\linewidth}
        \centering
      \includegraphics[width=\linewidth]{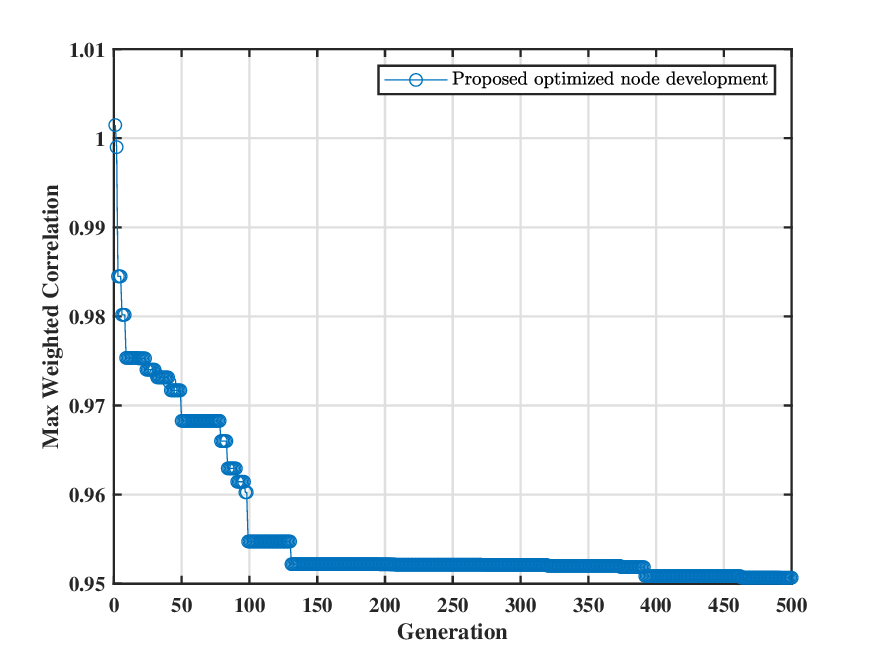}
    \caption{Convergence behavior of the proposed GA-based optimization schemes.}
    \label{r1}
    \end{minipage}
    \hfill
    \begin{minipage}[b]{0.32\linewidth}
        \centering
        \includegraphics[width=\linewidth]{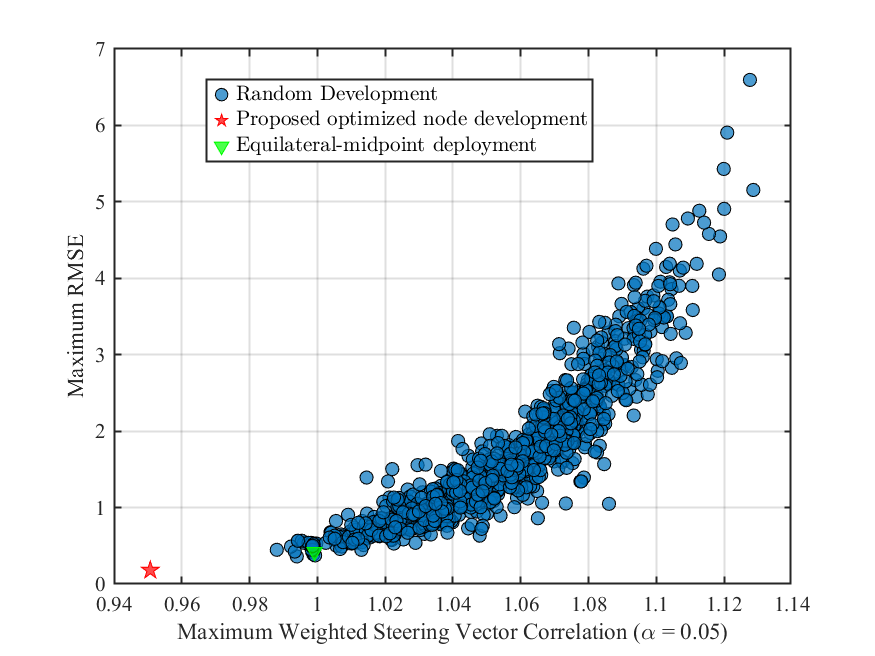}
        \caption{Comparison of optimized node Deployment against random deployments.}
        \label{r2}
    \end{minipage}
    \hfill
    \begin{minipage}[b]{0.32\linewidth}
        \centering
        \includegraphics[width=\linewidth]{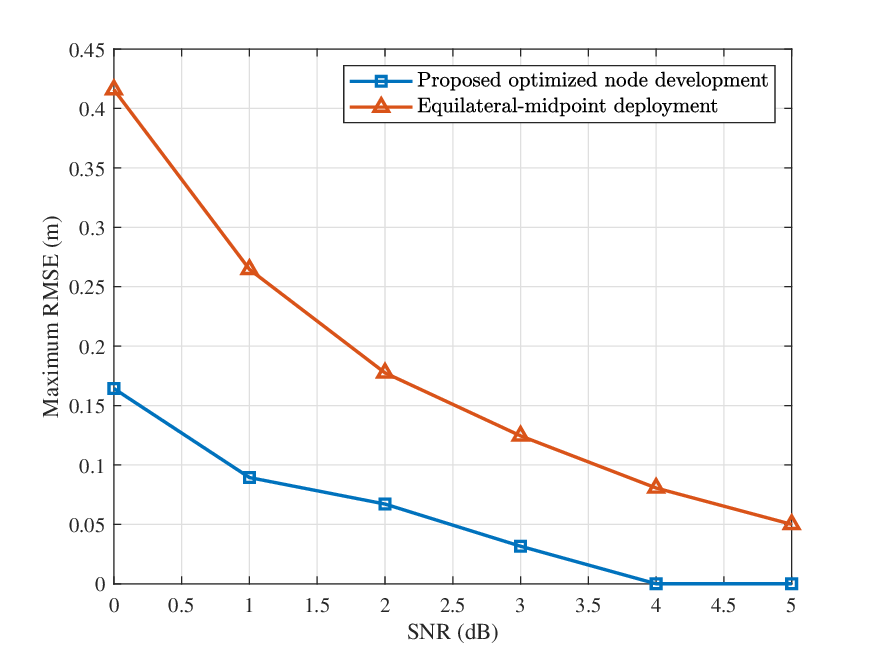}
        \caption{Impact of SNR on the maximum RMSE performance.}
        \label{r3}
    \end{minipage}
\end{figure*}
\begin{figure*}[!t]
    \centering
           \begin{minipage}[b]{0.32\linewidth}
        \centering
        \includegraphics[width=\linewidth]{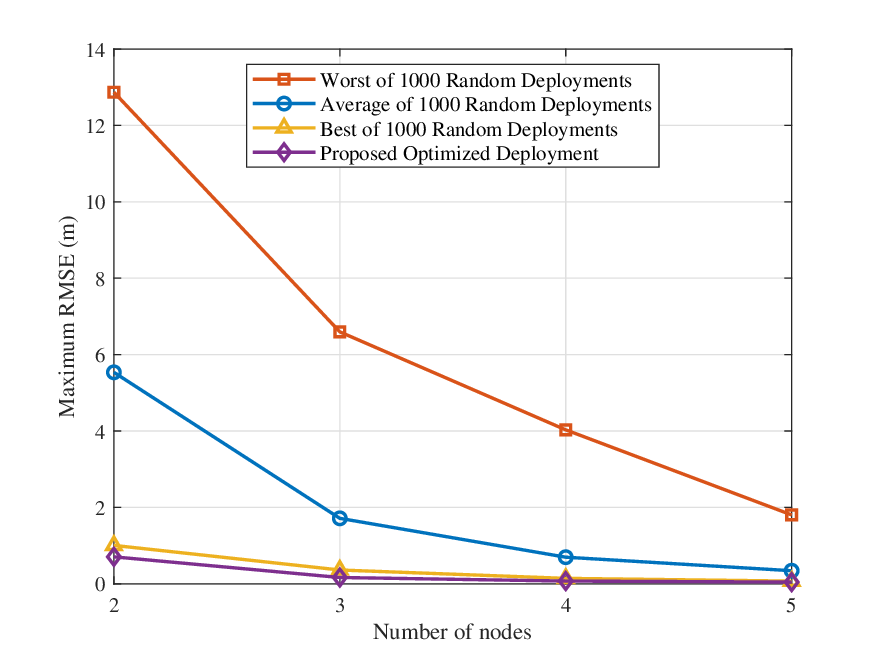}
        \caption{Impact of the number of nodes on the maximum RMSE performance.}
        \label{r4}
    \end{minipage}
    \hfill
    \begin{minipage}[b]{0.32\linewidth}
        \centering
\includegraphics[width=\linewidth]{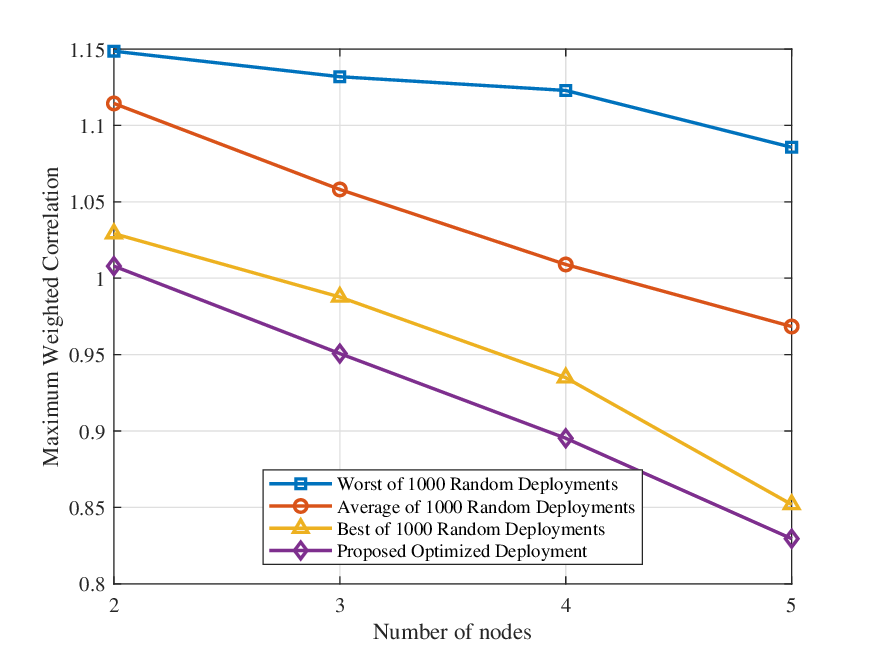}
    \caption{Impact of the number of nodes on the maximum weighted correlation performance.}
    \label{r5}
    \end{minipage}
    \hfill
    \begin{minipage}[b]{0.32\linewidth}
        \centering
\includegraphics[width=\linewidth]{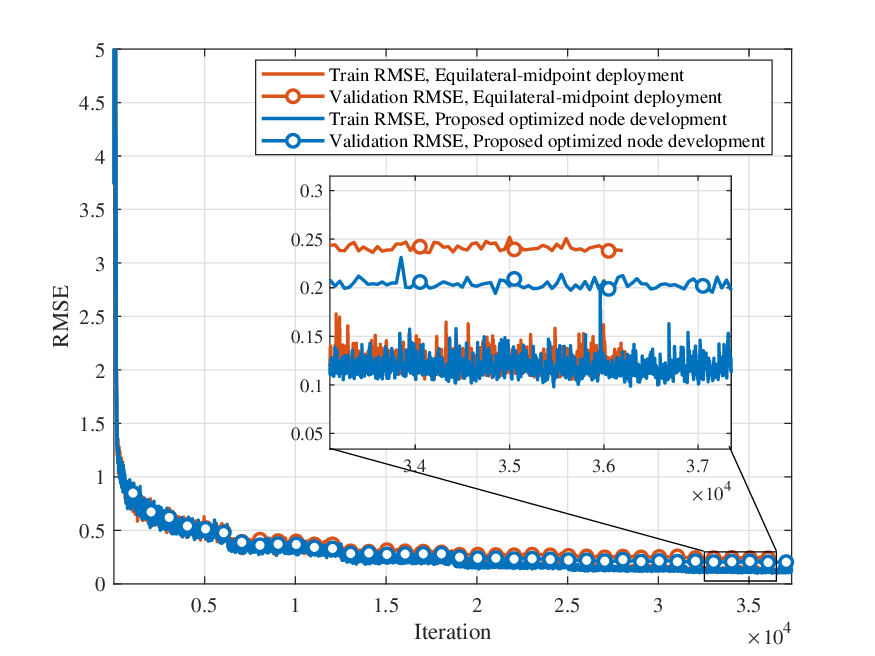}
        \caption{Training/validation RMSE of the regression network.}
        \label{r7}
    \end{minipage}
\end{figure*}
We adopt two representative approaches for performance evaluation: the two-dimensional MUSIC algorithm, which is widely used in subspace-based localization, and a NN-based method, which represents the emerging data-driven paradigm. 
For NN-based localization, the input feature is constructed by vectorizing the real and imaginary parts of the sample covariance matrix of the received snapshots. Two network architectures are adopted:
classification network and regression network.\footnote{Based on (1), we synthesize a dataset by randomly sampling target positions within $\mathcal{P}$ and generating received snapshots under randomized additive noise. A total of $100{,}000$ samples are created; $80\%$ are used for training and $20\%$ for validation. Implementation details for the NN-based localization can be found at \url{https://gitcode.com/qkq10/NN-based_Localization}.}
. 


\subsection{Convergence Verification}

Fig.~\ref{r1} validates the convergence behavior of the proposed GA-based optimization, $J=3$. It can be observed that the proposed deployment optimization consistently drives the maximum weighted correlation downward and stabilizes after about 400 generations.

\subsection{Robustness Verification with MUSIC-based Localization}
We first evaluate robustness under the MUSIC algorithm using $J=3$ cooperative nodes, where the maximum RMSE within the network is adopted as the metric. As shown in Fig.~\ref{r2}, three schemes are compared: the proposed optimized deployment, the fixed equilateral-midpoint deployment~\cite{10960698}, and $1000$ random deployments. Consistent with our conjecture, the optimized deployment achieves the lowest maximum weighted steering vector correlation, thereby yielding the smallest maximum RMSE. This validates the effectiveness of the proposed correlation-based optimization framework in enhancing worst-case localization performance.

\subsubsection{Impact of SNR on Robustness}
We next examine the impact of SNR in Fig.~\ref{r3}. The proposed deployment shows a clear advantage in the low-SNR regime, where errors are more likely due to noise-induced misidentification. As SNR increases, position-dependent signals become easier to distinguish, reducing the robustness challenge and narrowing the performance gap. Nevertheless, the proposed framework remains particularly beneficial for improving robustness in low-SNR scenarios.

\subsubsection{Impact of the Node Number on Robustness}
To further investigate the influence of cooperative node density, we simulate $2000$ random deployments and record the best, worst, and average performance. Fig.~\ref{r4} shows that the maximum RMSE statistically decreases with more nodes, implying improved robustness on average. However, the best two-node deployment can still outperform the worst five-node deployment, highlighting that robustness is highly deployment-dependent. In contrast, our optimized deployment consistently outperforms the best random deployment across all node numbers. Similar observations are obtained in Fig.~\ref{r5} using the maximum weighted steering vector correlation as the metric, further supporting the validity of the proposed optimization criterion.

\subsection{Robustness Verification with NN-based Localization}
We further validate robustness under NN-based localization. Fig.~\ref{r7} illustrates the training and validation curves for a regression network. Although the optimization is formulated in terms of worst-case robustness, the networks trained with optimized deployments achieve better validation performance compared with the equilateral-midpoint baseline. This is because reducing steering-vector correlation enhances the separability of covariance features, which indirectly benefits average-case learning.

\begin{table}[!t]
\centering
\caption{Robust performance comparison with NN-based localization.}
\begin{tabular}{lcc}
\toprule
\textbf{Deployment Strategy} & \textbf{Worst Acc.} & \textbf{Max RMSE} \\
\midrule
Proposed scheme & \textbf{0.9660} & \textbf{0.2824} \\
Equilateral-midpoint & 0.8632 & 0.4637 \\
\bottomrule
\end{tabular}
\label{tab:robust_performance}
\end{table}
Table~\ref{tab:robust_performance} summarizes the test results. In the classification task, the worst-case accuracy improves from $0.8632$ to \textbf{0.9660}. In the regression task, the maximum RMSE decreases from $0.4637$ to \textbf{0.2824}. These results confirm that robustness-oriented node deployment enhances NN-based localization across both classification and regression networks.



\section{CONCLUSION}
In this paper, we have explored the impact of node deployment on localization robustness in multi-node cooperative systems, specifically focusing on ensuring spatially continuous and resilient positioning services. By analyzing the relationship between node geometry and localization accuracy, we introduced a novel distance-weighted steering vector correlation metric that accurately captures the influence of node deployment on localization performance. Based on this, we formulated a new optimization framework to minimize the maximum weighted steering vector correlation, aimed at enhancing the robustness of localization services.
Moreover, we developed a GA to solve the optimization problems. Simulation results
validate the effectiveness of the
proposed methods, demonstrating significant improvements in
robust localization performance. 


\section*{Acknowledgment} Haojin Li and Kaiqian Qu contributed equally to the work.

\bibliographystyle{IEEEtran} 
\bibliography{IEEEabrv,bib}
\end{document}